\documentclass[conference]{IEEEtran}
\usepackage{cite}
\usepackage{amsthm}

\newtheorem{theorem}{Theorem}

%

\usepackage{multirow}
\usepackage{braket}
\usepackage{lineno,xcolor}
\usepackage{amsmath}
\usepackage{amsfonts}%
\usepackage{amssymb}
\usepackage{wasysym}
\usepackage{stmaryrd}
\usepackage{bbm} 
\usepackage{tabularx}
\usepackage{multirow}
\usepackage{mathtools}
\usepackage{enumerate}
\usepackage{subfigure}
\usepackage{pgf}
\usepackage{bm}
\usepackage{multicol}
\usepackage{color}
\usepackage{url}
\usepackage{bibunits}
\usepackage{wrapfig}
\usepackage{sidecap}
\usepackage{soul,xcolor}
\usepackage{eurosym}
\usepackage{mathrsfs}
\usepackage[utf8]{inputenc}
\usepackage{algorithmicx}
\usepackage{algpseudocode}
\usepackage{algorithm}
\usepackage{stmaryrd}
\usepackage{upgreek}
\usepackage[hidelinks]{hyperref}
\usepackage{paralist}
\usepackage{graphicx}        
\usepackage{setspace}
\usepackage{tikz}
\usetikzlibrary{arrows}
\usepackage{cleveref}
\usepackage{subfigure}


\usepackage{etoolbox}
\makeatletter
\patchcmd{\@makecaption}
{\scshape}
{}
{}
{}
\makeatother

\DeclareMathAlphabet{\mathpzc}{OT1}{pzc}{m}{it}

\usepackage{epstopdf}
\usepackage{soul}
\setstcolor{red} 
\setul{0pt}{0.7pt} 

\setcounter{tocdepth}{9}
\setcounter{secnumdepth}{4}

\IEEEoverridecommandlockouts


\usepackage{enumitem}
\setlist{nolistsep}
\newcommand*\xor{\oplus}

\begin{document}
\title{Enhanced Min-Sum Decoding of Quantum Codes \\Using Previous Iteration Dynamics}
\author{Dimitris~Chytas, Nithin Raveendran,~\IEEEmembership{Member,~IEEE,} and Bane~Vasi\'{c},~\IEEEmembership{Fellow,~IEEE}\\
\IEEEauthorblockA{Department of Electrical and Computer Engineering The University of Arizona, Tucson, AZ 85721, USA}
Email: \{dchytas, nithin\}@arizona.edu, vasic@ece.arizona.edu
\thanks{This work is supported by the NSF under grants CIF- 2420424, CIF-1855879, CIF-2106189, CCF-2100013, ECCS/CCSS-2027844, ECCS/CCSS-2052751, and in part by Jet Propulsion Laboratory, California Institute of Technology, under a contract with the National Aeronautics and Space Administration and funded through JPL’s Strategic University Research Partnerships (SURP) program. Bane Vasi\'{c} has disclosed an outside interest in his startup company, Codelucida to The University of Arizona. Conflicts of interest resulting from this interest are being managed by The University of Arizona in accordance with its policies. Authors would also like to thank Michele Pacenti for his thoughtful comments and suggestions during the revision of this work. 
}
}

\maketitle

\begin{abstract}
In this paper, we propose a novel message-passing decoding approach that leverages the degeneracy of quantum low-density parity-check codes to enhance decoding performance, eliminating the need for serial scheduling or post-processing. Our focus is on two-block Calderbank-Shor-Steane (CSS) codes, which are composed of symmetric stabilizers that hinder the performance of conventional iterative decoders with uniform update rules. Specifically, our analysis shows that, under the isolation assumption, the min-sum decoder fails to converge when constant-weight errors are applied to symmetric stabilizers, as variable-to-check messages oscillate in every iteration. To address this, we introduce a decoding technique that exploits this oscillatory property by applying distinct update rules: variable nodes in one block utilize messages from previous iterations, while those in the other block are updated conventionally. Logical error-rate results demonstrate that the proposed decoder significantly outperforms the normalized min-sum decoder and achieves competitive performance with belief propagation enhanced by order-zero ordered statistics decoding, all while maintaining linear complexity in the code's block length.
\end{abstract}	

\begin{IEEEkeywords}
QLDPC codes, min-sum decoding, parallel scheduling, degeneracy, symmetric stabilizers.
\end{IEEEkeywords}
\section{Introduction}
\IEEEPARstart{
Q}{uantum} low-density parity-check (QLDPC) codes have gained attention over topological codes because of their superiority in the minimum distance and their code rate scaling asymptotically~\cite{panteleev2021quantumLinearMinDLocalTestable, QuantumTannerCodeszemor}. The QLDPC codes families, such as the bivariate bicycle (BB) codes, are numerically proven to be more efficient than topological codes for quantum error-correction (QEC)~\cite{bravyi2024high}. 
However, the quantum decoding problem remains challenging for QLDPC codes. Unlike classical error-correction, in QEC, the goal lies in finding the most probable coset of degenerate errors that matches a given error syndrome; hence there can be multiple equally likely error patterns that match the input syndrome. Therefore, in principle, in QEC, the decoder is not restricted to converge to a single error pattern, as any error pattern belonging to the same coset is a valid error estimate. This phenomenon is known as \textit{degeneracy}. Although degeneracy is not inherently harmful, it hinders the performance of message-passing decoders that are efficient in the classical domain, such as belief-propagation (BP) or min-sum (MS) decoding. In fact, when the minimum distance of the code becomes significantly larger than the weight of the stabilizers, meaning that the number of degenerate errors increases, the performance of the message-passing decoder deteriorates. 

To mitigate the impact of degeneracy, solutions such as modifying the message-passing rules using post-processing or changing the message scheduling  have been proposed. Both probabilistic and deterministic modifications of BP~\cite{Poulin_2008,BPOTS} demonstrate improved performance on highly degenerate codes. Additionally, scheduling strategies such as layered BP and MS algorithms improve decoding by leveraging degeneracy, although with added latency~\cite{layered,informed}. Post-processing techniques such as Ordered Statistics Decoding (OSD)~\cite{osd}, Stabilizer Inactivation (SI)~\cite{StabInactivation_Julien_2022}, and BP with guided decimation (BPGD)~\cite{decim} offer further performance gains at the cost of increased complexity (e.g. $O(n^3)$, $O(n^2\log n)$, $O(n^2)$, where $n$ is the block length).
Recently, additional post-processing techniques such as~\cite{amb,local}, which are based on more efficient handling of the matrix inversion problem of BP-OSD, have been proposed. Ambiguity clustering~\cite{amb} dynamically builds a block or cluster structure in the parity check matrix. In this way, the decoder handles ambiguities by focusing on local error regions rather than trying to solve the entire system at once. On a similar note, localized statistics decoding~\cite{local} independently and concurrently processes disconnected subgraphs (clusters) where errors typically occur, although complexity remains cubic with respect to the cluster size. Finally, recent works implemented post-processing techniques targeting more accurate noise models such as BP plus ordered Tanner forest (BP+OTF)~\cite{otf}, sliding window BPGD decoding~\cite{window} and SymBreak decoder~\cite{symbreak}.

In addition to the degeneracy property that is inherent in all good quantum codes,  QLDPC codes also require long-range qubit connections, which induce additional delays\cite{xu2024constant, bravyi2024high},  
and further restrict the latency budget available for decoding implementations. Therefore, the main challenge in QEC is to balance the decoding performance without the penalty of high decoding latency. 

MS decoding is a low complexity variant of BP decoding which is widely considered in classical error correction. In particular, normalized min-sum (nMS) provides a good approximation of BP decoding if the normalization parameter is carefully selected~\cite{05CDEFH}. Syndrome-based MS can be applied for QEC, but good results are only obtained when one applies post-processing techniques like OSD and SI or when serial/adaptive scheduling is used~\cite{memory,informed}. The failures of MS and any iterative decoder with symmetric rules are linked to the existence of symmetric stabilizers which impose degeneracy~\cite{quantumTS,Fuentes_DegeneracyImpact_IEEEAccess21}.

In this work, we focus on two-block Calderbank-Shor-Steane (CSS) codes. Using a computation tree analysis, we prove that the variable-to-check messages exchanged within a symmetric stabilizer oscillate during each iteration of MS decoding. This oscillation occurs specifically when constant-weight error patterns are applied to the symmetric stabilizers, which are uncorrectable by MS. 
The proposed decoder, namely \emph{MS with past influence or MS-PI decoder}, leverages the oscillatory nature of the messages by introducing a memory attribute, i.e., by employing a rule that utilizes past messages during the variable update rule, and is thus able to correct these constant weight error-patterns. Similar approaches where the influence of past messages is utilized have also been explored in classical error-correction literature, i.e., the self-corrected MS~\cite{savin08isit}. MS-PI incorporates two (or more) variable update rules, using the block matrices as a distinguishing parameter between them.
More specifically, 
when the sign of a variable-to-check message does not change in two subsequent iterations, we employ the traditional MS update rule (see Section~\ref{sec:MS}), while if its sign is flipped, the new message is computed by summing the current and previous message. 
By doing so, we introduce asymmetry in the decoding process, as the variable-to-check messages in one block matrix differ from those in the other block matrix, in contrast to the symmetric behavior of the traditional MS decoder 
We present the algorithm and the intuition behind along with detailed analysis in Section~\ref{sec:memMS}. Simulation results demonstrate that by exploiting the oscillations of variable-to-check messages and at the same time by introducing different rules with respect to the block matrix, our decoder has significantly improved its performance compared to nMS. Also nMS-PI slightly outperforms BP-OSD decoding (which is used as a benchmark for performance), in only $50$ iterations, while also maintaining similar complexity to nMS. In particular, at low crossover probabilities, nMS-PI achieves increasingly larger performance gains over nMS as the blocklength (and thus degeneracy) of BB codes grows. For instance, for a crossover probability $\alpha = 0.02$, nMS-PI reduces the logical error rate by three orders of magnitude for the $[[288,12,18]]$ code and by one order of magnitude for the $[[144,12,12]]$ code.
Notably, nMS-PI achieves a threshold of $7.8 \%$ for the family of BB codes when run for $50$ iterations. Threshold further increases to $8 \%$ and $8.1 \%$ when the number of iterations is extended to $100$ and $200$, respectively.

The rest of the paper is organized as follows. In Section~\ref{sec:pre}, we introduce the preliminaries of QEC along with an overview of syndrome MS decoding. In Section~\ref{sec:MScomp}, we analyze the behavior of MS for symmetric stabilizers. In Section~\ref{sec:memMS}, we give the intuition behind the proposed algorithm by providing analysis based on computation trees and we finally present the proposed decoder. Finally, Section~\ref{sec:performance}, outlines simulation results obtained over BB codes.

\section{Preliminaries}
\label{sec:pre}
\subsection{Stabilizer Formalism}
Let \(\mathbb{F}_2^n\) denote the field of binary vectors of length \(n\). The \textit{Hamming weight} of an element in \(\mathbb{F}_2^n\) is defined as the number of its non-zero entries. An \([n, k, d]\) linear code \(C \subset \mathbb{F}_2^n\) is a linear subspace of \(\mathbb{F}_2^n\) spanned by \(k\) independent vectors, such that each codeword in \(C\) has a minimum Hamming weight of at least \(d\) (minimum distance of the code). A code \(C\) can also be represented by a parity-check matrix \(H\) of size \((n-k) \times n\), where \(C = \ker H\). Throughout this study, we consider regular parity-check matrices, where $d_v$ and $d_c$ denote the column-weight (variable degree) and row-weight (check degree) of $H$ respectively.

Let \((\mathbb{C}^2)^{\otimes n}\) denote the \(n\)-dimensional Hilbert space, and \(P_n\) the \(n\)-qubit Pauli group. A \textit{stabilizer} group is defined as an Abelian subgroup \(S \subset P_n\). An \(\llbracket n,k,d \rrbracket\) stabilizer code is a \(2^k\)-dimensional subspace \(\mathcal{C} \subset (\mathbb{C}^2)^{\otimes n}\) that satisfies 
\[
s_i\ket{\Psi} = \ket{\Psi}, \quad \forall\ s_i \in S, \ \ket{\Psi} \in \mathcal{C}.
\]
A \(\llbracket n, k_X - k_Z^{\perp}, d \rrbracket\) CSS code \(\mathcal{C}\) is a type of stabilizer code constructed using two classical codes, \(C_X = \ker H_X\) and \(C_Z = \ker H_Z\), with parameters \([n,k_X,d_X]\) and \([n,k_Z,d_Z]\), respectively, such that \(C_Z^{\perp} \subset C_X\) and \(C_X^{\perp} \subset C_Z\)~\cite{calderbank1996quantum_exists}. The minimum distance of the code is \(d \geq \min\{d_X, d_Z\}\), where \(d_X\) is the minimum Hamming weight of a codeword in \(C_X \setminus C_Z^{\perp}\), and \(d_Z\) is the minimum Hamming weight of a codeword in \(C_Z \setminus C_X^{\perp}\). A detailed description of the stabilizer formalism can be found in~\cite{Gottesman97}.

In this work, we focus on two-block CSS codes, which include bicycle QLDPC codes~\cite{mackay_quantum}, BB codes~\cite{bravyi2024high} and two-block group algebra codes~\cite{PhysRevA.109.022407} as specific instances. To briefly describe this family, consider two binary \(n \times n\) matrices \(A\) and \(B\) (blocks) that commute, i.e., \(AB = BA\). The parity-check matrices are then defined as:
\begin{equation}
H_X = [A, B], \quad H_Z = [B^T, A^T].
\label{eq:pcm}
\end{equation} 
It follows that \(H_X H_Z^T = AB + BA = 0\), ensuring that the commutativity condition is satisfied, resulting in a CSS code. As an example, binary circulant matrices \(A\) and \(B\) can be used, as they always commute. This class of codes encompasses the bicycle codes from~\cite{mackay_quantum} as a special case when \(B = A^T\).

CSS codes facilitate binary decoding because of their structure, i.e., $X$ errors are decoded using $H_Z$, and $Z$ errors are decoded using $H_X$. Therefore, one can consider the two independent binary symmetric channels (BSCs) rather than the
depolarizing channel, thus ignoring the correlation between the $X$ and $Z$ errors~\cite{mackay_quantum}. Let us denote as $\mathbf{e}=[\mathbf{e}_X, \mathbf{e}_Z]$ the binary representation
of a Pauli error acting on the $n$ qubits. The corresponding input syndromes are obtained as $\mathbf{s}_Z= \mathbf{e}_X  H_Z^T$
and $\mathbf{s}_X= \mathbf{e}_Z  H_X^T$, or in a compact form, $\mathbf{s}=[\mathbf{s}_X, \mathbf{s}_Z]$. 

A zero syndrome vector indicates that all stabilizers commute with the error pattern, implying no detectable errors. Conversely, a non-zero syndrome signals the presence of an error. The goal of a syndrome-based decoder is to estimate an error pattern, $\hat{\mathbf{e}}$, which produces a syndrome $\mathbf{\hat{s}}$ that matches $\mathbf{s}$. Decoding is  successful if the initial error $\mathbf{e}$ is recovered up to a stabilizer, which means that $\mathbf{e} 
 \xor \hat{\mathbf{e}}$ belongs to the rowspace of the parity check matrix $H$. Error correction fails when the input syndrome $\mathbf{s}$ has not been matched, or when decoding results in a \textit{logical error}, 
such that $\mathbf{e} \oplus \hat{\mathbf{e}}$  commutes with all the stabilizers, but it's not in the rowspace of $H$.

The $H_X$ and $H_Z$ stabilizer matrices can each be represented as a \textit{Tanner graph},  which is a bipartite graph with two sets of nodes:  $n$ variable (qubit) nodes denoted by $j \in \{1, 2, ...,n\}$ and  $m$ check nodes denoted by $i \in \{1, 2, ...,m\}$. If the $(i,j)$-th entry of the binary parity check matrix is not zero, then there is an edge connecting the variable node $j$ and check node $i$ (edges connect neighbouring nodes).
We denote by $\mathcal{M}(j)$ the indices of the neighboring check nodes of the variable node $j$, and by $\mathcal{N}(i)$ the indices of the neighboring variable nodes of the check node $i$.
For the rest of this paper, only one type of error ($X$ error) and its decoding will be considered; without loss of generality, the notation $H$ will refer to $H_Z$, $\mathbf{e}$ will refer to $\mathbf{e}_X$, and $\mathbf{s}$ will refer to $\mathbf{s}_Z$.

\subsection{Syndrome based MS decoding}
\label{sec:MS}
Message-passing algorithms can be employed for QLDPC decoding in a similar fashion to the classical LDPC case. The main difference is that the QLDPC decoders take
as input the syndrome obtained after the stabilizer measurements, whereas LDPC decoders take as input a noisy
version of the codeword. The QLDPC decoder aims to
output an error pattern whose syndrome matches the measured
syndrome. 

We will now briefly describe the syndrome-based nMS decoder. A check-to-variable message sent from check node $i$ to variable node $j$ at the $\ell$-th iteration is denoted by $\mu^{(\ell)}_{i,j}$, whereas a variable-to-check message sent from variable node $j$ to check node $i$ at the $\ell$-th iteration is denoted by $\nu^{(\ell)}_{j,i}$. Assuming a noise-free syndrome measurement and a binary symmetric channel inducing only $X$ errors with probability $\alpha$, the measured syndrome value is $s_i \in \{0,1\}$, $i \in \{1, 2, ..., m\}$ and the a priori log-likelihood ratio value for every variable node is given by $\lambda_j=\lambda=\log({\frac{1-\alpha}{\alpha}})$. For the sake of brevity, we denote the natural numbers from  $1$ to $n$ by $[n]$.
    
The variable node update rule is computed as: 
\begin{equation}
    \nu^{(\ell)}_{j,i} = \lambda_j + \sum\limits_{i' \in \mathcal{M}(j) \backslash \{ i \}}\mu^{(\ell)}_{i',j},
    \label{eq:vnu}
\end{equation}
and the check node update rule is computed as: 
\begin{equation}
   \mu^{(\ell)}_{i,j} = (1-2s_i)\prod\limits_{j' \in \mathcal{N}(i) \backslash \{j\}} \mathrm{sgn}(\nu^{(\ell-1)}_{j',i})\min\limits_{j' \in \mathcal{N}(i) \backslash \{j\}}|\nu^{(\ell-1)}_{j',i}|,
    \label{eq:cnu}
\end{equation}
where the sign function is defined as 
\[\mathrm{sgn}(u)= 
\begin{cases}
        -1, & \text{if $u<0$}\\
   +1, & \text{otherwise.}
\end{cases}\]
In this work we use the normalized version of MS algorithm, hence check-to-variable messages are multiplied by a scalar $\beta$. Finally, the error estimate $\hat{\mathbf{e}}$ is given by the decision update function:
\begin{equation}
\hat{e}_j=\mathrm{sgn}\left(\lambda_j + \sum\limits_{i \in \mathcal{M}(j) }\mu^{(\ell)}_{i,j}\right)
\label{eq:dec}
\end{equation}
The decoding procedure is continued until the maximum number of iterations $L$ has been reached or until the syndrome at the $\ell$-th iteration equals the input syndrome. The decoding succeeds when the estimated error pattern $\hat{\mathbf{e}}$ is equal to the
actual error pattern, or one of the equivalent error patterns that
lead to the same syndrome (degeneracy).

Recalling the quantum decoding problem, degenerate errors which are harmful to iterative decoders are referred to as symmetric degenerate errors. The corresponding sets of variable nodes are referred to as symmetric stabilizers, or simply  $(w, 0)$ trapping sets~\cite{quantumTS}, where $w$ is the weight of the stabilizer (number of variable nodes) and $0$ is the number of odd-degree check nodes in the set. 
Trapping sets are thoroughly defined and discussed in~\cite{ontology} and~\cite{quantumTS}. Iterative decoders with symmetric rules typically fail when $\frac{w}{2}$-weight error-patterns are applied to $(w,0)$ trapping sets, which is the case that we will analyze next.
\section{Computation Tree Analysis}
\label{sec:MScomp}

We will now show how MS decoding evolves for every $\frac{w}{2}$ error-pattern inside a $(w,0)$ stabilizer, using \emph{computation trees}, a tool also deployed in classical error correction to analyze iterative decoding~\cite{96Wiberg,SiegelComp}. Computation trees have also been used in QEC~\cite{blind} in order to demonstrate error-correction capabilities of the MS decoder for toric codes. A $k$-iteration computation tree unrolls the Tanner graph into a tree structure rooted at a variable node, recursively adding edges and leaf nodes that participate in the iterative message-passing decoding over $k$-iterations. 
\begin{figure}[ht]
 \centering
\begin{subfigure}[]{
 \centering
\includegraphics[width=0.16\textwidth]{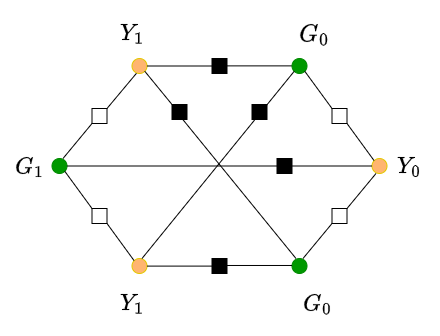}
\label{fig:comparisonCompTree1}
}
\end{subfigure}
\begin{subfigure}[]{
 \centering
\includegraphics[width=0.26\textwidth]{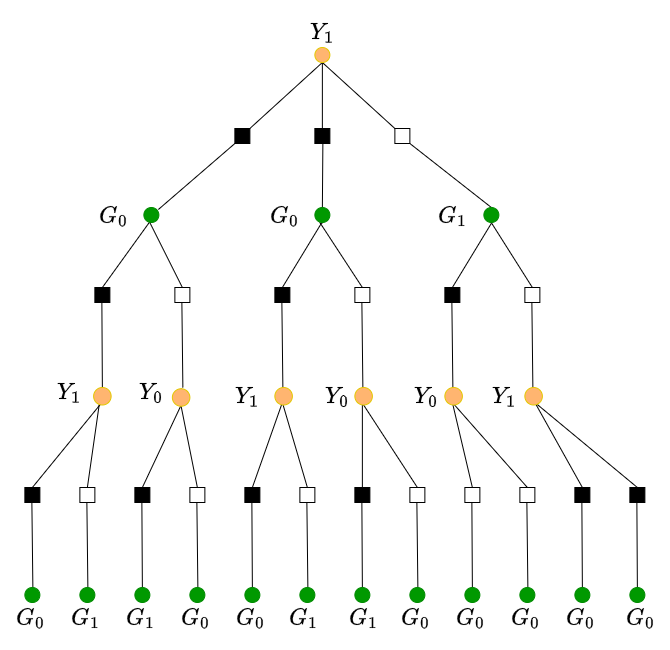}
\label{fig:comparisonCompTree2}
}
\end{subfigure}
\caption{A $(6,0)$ stabilizer of a two-block CSS code with $d_v=3$ and $d_c=6$
and its corresponding computation tree which is unrolled for three iterations. Variable nodes are colored yellow or green, depending on the block to which they belong. In this example, a weight-$3$ error pattern is applied to two variable nodes of yellow color and one variable node of green color. To indicate their respective error states,  the variable nodes of yellow color are labeled \( Y_1 \) if they are in error and \( Y_0 \) if they are error-free, while green nodes are labeled \( G_1 \) and \( G_0 \). Note that any weight-$3$ error-pattern applied to two yellow and one green node would result in the computation tree of Fig.~\ref{fig:comparisonCompTree}\subref{fig:comparisonCompTree2}.
}
\label{fig:comparisonCompTree}
\end{figure}
Throughout our analysis we will assume $(w,0)$ symmetric stabilizers, where $w=d_c$ is even and $w\geq 4$. 
We also assume that $X$ errors are induced by a BSC with crossover probability $\alpha$. 
Let $\mathcal{V}$ denote the trapping set. As an example, consider a trapping set with six variable nodes, and its induced graph shown in Fig.~\ref{fig:comparisonCompTree}\subref{fig:comparisonCompTree1}  (conventionally, but with moderate abuse of terminology, we refer to this induced graph also as a trapping set). In Fig.~\ref{fig:comparisonCompTree}, we color the circles with yellow and green to indicate variable nodes corresponding to the columns of the block matrices \( A \) and \( B \), respectively~(\ref{eq:pcm}). Therefore, the set of variable nodes \( \mathcal{V} \) is partitioned into two disjoint subsets: \( \mathcal{V}_{\text{yellow}} \), representing yellow nodes, and \( \mathcal{V}_{\text{green}} \), representing green nodes. Each yellow node \( j \in \mathcal{V}_{\text{yellow}} \) is assigned a binary labeling function \( Y: \mathcal{V}_{\text{yellow}} \to \{Y_1, Y_0\} \):
\[
Y(j) = 
\begin{cases} 
Y_1, & \text{if } j \text{ is initially in error,} \\
Y_0, & \text{if } j \text{ is initially error-free.}
\end{cases}
\]
Hence, \( Y_1 \) denotes the state ``initially in error", and \( Y_0 \) denotes the state ``initially error-free". The binary labels for green nodes (\( \mathcal{V}_{\text{green}} \)) are also defined in the same manner. As an example, in Fig.~\ref{fig:comparisonCompTree}, a weight-$3$ error-pattern is applied to two ``yellow" nodes and one ``green" node, hence there are two nodes with labels $Y_1$, one node with label $G_1$ and the rest of the nodes are labeled as $Y_0$ or $G_0$ depending on their color.

For the upcoming analysis, we will adopt the \emph{isolation assumption}~\cite{iso}.
This assumption states that all nodes outside the trapping set have converged, meaning that the amplitude of their variable-to-check messages is large enough so that they do not propagate into the trapping set under MS decoding. This implies that for each check node, it is sufficient to consider only the edges connected to variable nodes inside the trapping set, making such check nodes effectively of degree two in the computation tree. This will be useful in analyzing how the variable-to-check messages inside the trapping set evolve. Specifically, the degree-$2$ checks simplify the computation of check-to-variable messages, as the minimum operation is unnecessary. Instead, the variable-to-check messages can be expressed as a recursive function of the variable-to-check messages from the previous iteration. For simplifying the following analysis we will use different notation for the variable-to-check messages.
In the next paragraphs we will prove the following Theorem:
\begin{theorem}
    Consider a $(w,0)$ trapping set $\mathcal{V}$, with $w\geq 4$. Let $\frac{w}{2}$ variable nodes of $\mathcal{V}$ be initially in error. Then, under the isolation assumption, the variable-to-check messages exchanged within the trapping set, under the MS algorithm, will oscillate at every iteration \( \ell \).
    \label{theorem}
\end{theorem}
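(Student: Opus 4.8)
The plan is to track the variable-to-check messages through the \emph{computation tree} rooted at an arbitrary edge of the trapping set and to show that their signs are controlled by a single alternating geometric term. First I would invoke the isolation assumption to collapse every check inside $\mathcal{V}$ to degree two: for a check $i$ joining the two trapping-set variables $j$ and $j'$, the sign-product and the minimum in \eqref{eq:cnu} both disappear, leaving $\mu^{(\ell)}_{i,j} = (1-2s_i)\,\nu^{(\ell-1)}_{j',i}$. Substituting this into \eqref{eq:vnu} gives a closed affine recursion on the messages alone, $\nu^{(\ell)}_{j,i} = \lambda + \sum_{i'\in\mathcal{M}(j)\setminus\{i\}} (1-2s_{i'})\,\nu^{(\ell-1)}_{j'(i'),i'}$, where $j'(i')$ denotes the other trapping-set endpoint of check $i'$; this is exactly the map the computation tree unrolls.

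The key algebraic step I would establish next is a telescoping identity for the syndrome signs. Since each degree-two check satisfies $s_i = e_j \oplus e_{j'}$, we have $1-2s_i = (1-2e_j)(1-2e_{j'})$, so the product of the factors $(1-2s)$ along any root-to-leaf path collapses to $(1-2e_{\mathrm{root}})(1-2e_{\mathrm{leaf}})$ and depends only on the error states of the two endpoints. Equivalently, the substitution $\tilde\nu_{j,i}=(1-2e_j)\nu_{j,i}$ converts the recursion into the sign-homogeneous form $\tilde\nu^{(\ell)}_{j,i} = (1-2e_j)\lambda + \sum_{i'} \tilde\nu^{(\ell-1)}_{j'(i'),i'}$, whose only surviving signs sit in the source term ($+\lambda$ at error-free nodes, $-\lambda$ at nodes in error). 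Unrolling then yields the clean expression $\nu^{(\ell)}_{\mathrm{root}} = \lambda\,(1-2e_{\mathrm{root}})\sum_{k=0}^{\ell} S_k$, where $S_k$ is the signed count of length-$k$ non-backtracking root paths weighted by $(1-2e_{\mathrm{leaf}})$ at their endpoints.

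To produce the oscillation I would now exploit the two-block structure: the trapping-set graph is bipartite, with the yellow ($A$) and green ($B$) nodes forming the two sides, so each additional level of the tree switches color class. Together with the $\tfrac{w}{2}$ balanced error assumption, this makes the operator advancing the messages one level bipartite, so that its spectral radius $d_v-1$ is attained by the eigenvalue pair $\pm(d_v-1)$; concretely $S_k$ grows like $(-(d_v-1))^{k}$ up to the color/error bookkeeping. The partial sum $\sum_{k=0}^{\ell}S_k$ is therefore a bounded constant plus a term proportional to $(-(d_v-1))^{\ell}$, which flips sign at each increment of $\ell$ and grows geometrically, so every $\nu^{(\ell)}_{j,i}$ changes sign from one iteration to the next. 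I would confirm the mechanism on the fully symmetric pattern, where the recursion closes to the two states $\alpha^{(\ell)}=\lambda-(d_v{-}1)\gamma^{(\ell-1)}$ and $\gamma^{(\ell)}=\lambda-(d_v{-}1)\alpha^{(\ell-1)}$, giving $\nu^{(\ell)}\propto 1+2(-(d_v{-}1))^{\ell}$ in closed form.

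The main obstacle I anticipate is making ``oscillates at every iteration'' hold uniformly over all $\tfrac{w}{2}$ error patterns and all $w\ge 4$, rather than only for the fully symmetric pattern. Two points need care: showing that the projections of the initial condition and of the $\lambda$-source onto the negative eigenvalue $-(d_v-1)$ are nonzero, so the oscillating mode is genuinely present and not annihilated, and controlling the early transient, since for unbalanced colorings a few messages can keep a fixed sign for the first iterations until the dominant $(-(d_v-1))^{\ell}$ term overtakes the bounded source contribution. Proving that the sign flip is present at every $\ell$ — or precisely delimiting the short transient — is where the real work lies; the degree-two reduction, the telescoping identity, and the bipartite eigenvalue pairing are the structural facts that make it go through.
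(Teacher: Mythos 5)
Your reduction to degree-two checks and the gauge substitution $\tilde\nu_{j,i}=(1-2e_j)\nu_{j,i}$ are correct and genuinely elegant: they turn the iteration into the error-independent non-backtracking dynamics of the induced graph, which for these stabilizers is the complete bipartite graph $K_{w/2,\,w/2}$ (and the in-set degree is $\frac{w}{2}$, which equals $d_v$ for these codes). But your spectral step has a real gap. After gauging, the iteration matrix is \emph{nonnegative}, so by Perron--Frobenius its spectral radius $\frac{w}{2}-1$ is attained by a \emph{positive} eigenvalue whose eigenvector is the all-ones vector, and this mode grows exactly as fast as the alternating mode $-(\frac{w}{2}-1)$. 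Your claim that the solution is ``a bounded constant plus a term proportional to $(-(d_v-1))^{\ell}$'' is therefore unjustified: if the projection onto the Perron mode is nonzero, the gauged messages grow with a fixed sign, and after undoing the gauge the original messages have a constant sign on each edge --- no oscillation at all. The missing ingredient is precisely where the weight-$\frac{w}{2}$ hypothesis must enter: the Perron coefficient is proportional to $\sum_j (1-2e_j)$, which vanishes exactly when half the nodes of $\mathcal{V}$ are in error. You never use the weight hypothesis this way, and without that cancellation the argument collapses.

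Second, your own worry about the projection onto $-(\frac{w}{2}-1)$ is not a removable technicality but a failure mode of your two-mode picture: that coefficient is proportional to $f-g$ (errors on yellow minus errors on green), so for balanced patterns $f=g$ both the $+(\frac{w}{2}-1)$ and $-(\frac{w}{2}-1)$ modes are absent and the dynamics are governed by the remaining non-backtracking eigenvalues $\pm i\sqrt{\frac{w}{2}-1}$ and $\pm 1$ of $K_{w/2,\,w/2}$ --- complex modes your sketch never mentions, which give oscillation with period roughly four rather than a sign flip at every step. This is exactly the structure the paper's proof exposes: it splits into the one-color case (a scalar recursion with eigenvalue $-(\frac{w}{2}-1)$, your ``fully symmetric'' check) and the two-color case, where tracking the four message types $a,b,c,d$ via the label transitions yields a $4\times 4$ system with eigenvalues $1$, $\pm i\sqrt{\frac{w}{2}-1}$, and $1-\frac{w}{2}$. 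Your gauge transformation could be upgraded into a clean, unified proof --- compute the projections onto the Perron, anti-Perron, and imaginary eigenspaces, show the first is killed by the weight-$\frac{w}{2}$ condition, and handle $f=g$ through the imaginary pair --- but as written the proposal both omits the decisive cancellation and asserts an asymptotic picture that is false for part of the error patterns the theorem covers.
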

\begin{proof}
Theorem~\ref{theorem} will be proven by considering the two following cases for which a weight-$\frac{w}{2}$ error-pattern can be 
applied to a 
symmetric stabilizer. 
\begin{enumerate}
\item \emph{$\frac{w}{2}$ errors applied to nodes of only one color}:
Without loss of generality, assume that $\frac{w}{2}$ errors are applied to yellow nodes exclusively. In such case  
the label $Y_1$ is assigned to all variable nodes and all checks are unsatisfied. This results to only one type of variable-to-check message exchanged within the trapping set, so unrolling the computation tree yields a single equation describing this message, given by:
\begin{equation}
\begin{aligned}
a(\ell) &= -(\frac{w}{2}-1) \cdot a(\ell-1) +\lambda, \\
\end{aligned}
\label{eq:oneC}
\end{equation}
where $a(\ell)$ corresponds to a variable-to-check message sent from a variable node $Y_1$ to an unsatisfied check at iteration $\ell$ and is also illustrated in Fig.~\ref{fig:oneC}. The recursion in~(\ref{eq:oneC}) is derived by simplifying~(\ref{eq:vnu}), considering that only one incoming check-to-variable message contributes to the computation.
For simplicity, we will consider the homogeneous recurrence relation:
\begin{equation}
a(\ell) = -(\frac{w}{2}-1) \cdot a(\ell-1),
\end{equation}
for which, the solution is:
\begin{equation}
a(\ell) = \lambda \cdot (-\frac{w}{2}+1)^{\ell}.
\end{equation}

Hence we prove that variable-to-check messages oscillate in each iteration and their amplitude grows exponentially depending on the weight of the stabilizer. Finally, the error estimate for every node is given by (subscripts correspond to labels of variable nodes):
\begin{equation}
\begin{aligned}
\hat{q}_{Y_1} &=\hat{q}_{G_0}= -\frac{w}{2}\cdot a(\ell-1)=\lambda\cdot (-\frac{w}{2}+1)^{\ell}\cdot \frac{-w/2}{-w/2+1}, 
\end{aligned}
\end{equation}
which shows that the hard decision of MS will oscillate ($\ell$ at the exponent) between estimating the all-zero error-pattern and all-one weight-$w$ error-pattern.

\item \emph{$\frac{w}{2}$ errors are applied to  nodes of both colors}:
Assume that $f$ errors are applied to yellow nodes and $g$ errors are applied to green nodes. It holds that $f+g=\frac{w}{2}$, where  $f \geq g \geq 1$. For this case $Y_1$ is equivalent to $G_0$ and $Y_0$ is equivalent to $G_1$. Therefore, all possible \emph{transitions} in the computation tree are summarized in Fig.~\ref{fig:trn}. Here, a transition refers to a change or preservation of labels between two neighboring variable nodes. Specifically, a satisfied check switches the label, while an unsatisfied check retains the same label. This limited set of transitions 
enables us to develop recursive formulas for the variable-to-check messages of the stabilizer without the need to construct the whole computation tree. Since variable nodes are labeled by  $Y_1$ and $G_1$, and check nodes can either be there are four distinct types of variable-to-check messages that can be transmitted within the symmetric stabilizer, as shown in Fig.~\ref{fig:subtree}. Each message, i.e., $a(\ell),b(\ell),c(\ell),d(\ell)$, can be described by a recursive formula which is derived by the transitions of Fig.~\ref{fig:trn}.
For instance, in the case of $a(\ell)$, a variable node labelled as $Y_1$ is connected to $f$ unsatisfied checks and $g-1$ satisfied checks. The unsatisfied checks preserve the label of the node and therefore propagate the $b(\ell-1)$ message, $f$ times in total, while the satisfied checks switch the label of the node and therefore propagate the $c(\ell-1)$ message, $g-1$ times in total. The rest of variable-to-check messages are computed in a similar manner, giving rise to a system of four linear recursions:
\begin{equation}
\begin{aligned}
a(\ell) &= (g - 1) \cdot c(\ell-1) - f \cdot b(\ell-1)+\lambda, \\
b(\ell) &= g \cdot c(\ell-1) - (f - 1) \cdot b(\ell-1)+\lambda, \\
c(\ell) &= (f - 1) \cdot a(\ell-1) - g \cdot d(\ell-1)+\lambda, \\
d(\ell) &= f \cdot a(\ell-1) - (g - 1) \cdot d(\ell-1)+\lambda.
\end{aligned}
\end{equation}
By solving the homogeneous system, we get the following closed-form expression in terms of eigenvalues:
\begin{equation}
\begin{aligned}
\begin{pmatrix}
a(\ell) \\
b(\ell) \\
c(\ell) \\
d(\ell)
\end{pmatrix}
&=
D_1 \mathbf{v}_1 \cdot 1^{\ell} + D_2 \mathbf{v}_2 \cdot \left(i \sqrt{\frac{w}{2} - 1}\right)^{\ell} \\
&\quad + D_3 \mathbf{v}_3 \cdot \left(-i \sqrt{\frac{w}{2} - 1}\right)^{\ell}+ D_4 \mathbf{v}_4 \cdot \left(1 - \frac{w}{2}\right)^{\ell},
\end{aligned}
\end{equation}
where:
\( D_1 \), \( D_2 \), \( D_3 \), and \( D_4 \) are constants determined by the initial conditions and \( \mathbf{v}_i \) represents the eigenvector corresponding to the eigenvalue \( \delta_i \), for \( i = 1, 2, 3, 4 \). For \( w \geq 4 \), the eigenvalue \( \delta_1 = 1 \) contributes a constant component to the solution, while the purely imaginary eigenvalues \( \delta_2 = i \sqrt{\frac{w}{2} - 1} \) and \( \delta_3 = -i \sqrt{\frac{w}{2} - 1} \) introduce oscillatory behavior with a frequency proportional to \( \sqrt{\frac{w}{2} - 1} \). The real eigenvalue \( \delta_4 = 1 - \frac{w}{2} \), which satisfies \( \delta_4 \leq -1 \), results in \emph{exponential divergence with alternating sign}, leading to growth in magnitude while crossing zero. 
The error estimates for $Y_1$ and $G_0$ nodes coincide, therefore MS cannot achieve convergence (same with $Y_0$ and $G_1$). 
\end{enumerate}
\end{proof}
\begin{figure}[ht]
    \centering
    \includegraphics[width=0.12\textwidth]{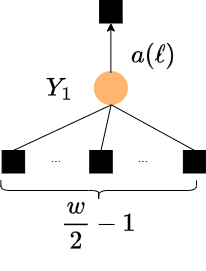}
    \caption{The only type of variable-to-check messages exchanged in a $(w,0)$ trapping set at iteration $\ell$, when $\frac{w}{2}$ errors are applied to yellow variable nodes.}
    \label{fig:oneC}
\end{figure}
\begin{figure}[ht]
    \centering
    \includegraphics[width=0.22\textwidth]{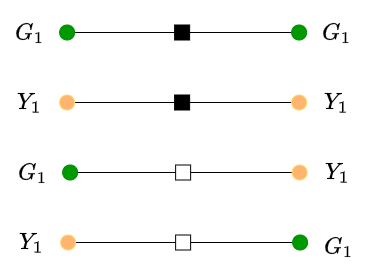}
    \caption{
    All possible transitions (change or preservation of labels
of variable nodes) between nodes of a symmetric stabilizer when errors are applied to nodes of both colors. Unsatisfied checks retain the same label between neighboring nodes (either $G_1$ or $Y_1$) while satisfied checks switch the label. This limited set of transitions reduces the computation tree analysis to a system of four recursive equations.} 
    \label{fig:trn}
\end{figure}
\begin{figure}[ht]
    \centering
 \includegraphics[width=0.4\textwidth]{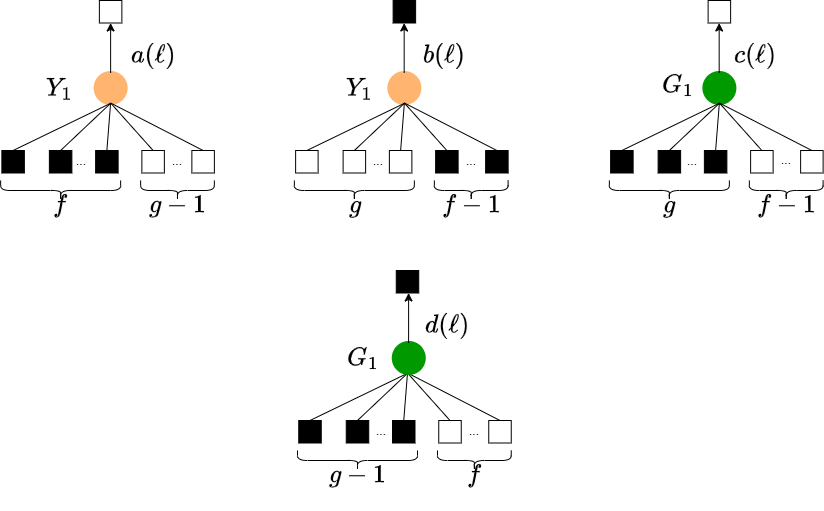}
    \caption{The four possible types of variable-to-check messages exchanged in a $(w,0)$ symmetric stabilizer when errors are applied to nodes of both colors. Note that $f+g=\frac{w}{2}$.}
    \label{fig:subtree}
\end{figure}
We note that the theorem and its proof can be extended to nMS, with the key difference being that the updated recursions incorporate the scalar $\beta$, which scales the check-to-variable messages through multiplication. Finally, this theorem is extended in the next Section for the analysis of MS-PI, and we complement the theoretical analysis with numerical results that corroborate the theorem, as the variable-to-check messages introduce non-linearity.
\section{Min-Sum with Past Influence}
\label{sec:memMS}
\subsection{Exploiting Oscillations of Variable-to-Check Messages}
For MS decoder, we proved that the variable-to-check messages exchanged within any $(w,0)$ symmetric stabilizer oscillate in every iteration, when any weight-$\frac{w}{2}$ error-pattern is applied to the stabilizer. This can be exploited by tracking oscillating messages and then modifying the update rules of the particular nodes.
Intuitively, since symmetric stabilizers are only formed by interconnecting the two block matrices, such erroneous configurations can be dealt with by assigning different update rules to variable nodes regarding which block matrix they belong to. 
Therefore, we propose  handling the oscillations of variable-to-check messages only for variable nodes of one block matrix. Such an action induces asymmetry and helps the iterative decoder to converge.

For the two-block codes, we will deploy two variable update rules, each applied to the block matrices in~(\ref{eq:pcm}). One rule will be the same as the MS rule in~(\ref{eq:vnu}).
For the second rule, it is crucial to use a rule that does not negatively affect performance and at the same time, introduces asymmetry when used in conjunction with the first rule. Since uncorrectable errors are characterized by oscillating messages, we will exploit that to our advantage. In particular, at each iteration $\ell$ we will compare the sign of the current variable-to-check message to the sign of the variable-to-check message sent during the previous iteration $\ell-1$ across the same edge. If the two signs are equal, then the update will be the same as (\ref{eq:vnu}), otherwise the update will be equal to the summation of the current and the previous message.
On one hand, the summation introduces asymmetry inside the symmetric stabilizer. On the other hand, it reduces the amplitude of the oscillating messages, which are deemed unreliable. This reduction helps to avoid overestimating the variable-to-check messages.
This rule is effective, since the variable-to-check messages of $(w,0)$ trapping sets oscillate in every iteration as we proved in Theorem~\ref{theorem}.

MS with past influence or MS-PI decoder, is presented in Algorithm~\ref{alg:ms}. Our modification is outlined in the following equations and we assume it is applied to green nodes; however, it can be equally applied to yellow nodes without loss of generality. First, the temporary message is computed: \begin{equation}
\nu = \lambda_j + \sum\limits_{i' \in \mathcal{M}(j) \backslash \{ i \}}\mu^{(\ell)}_{i',j},
\label{eq:temp}
\end{equation}
and then, after sign flips are checked, the variable-to-check message is computed as follows:
\begin{equation}
\nu^{(\ell)}_{j,i} = \begin{cases}
\nu, & \text{if ($\mathrm{sgn}(\nu)=\mathrm{sgn}(\nu^{(\ell-1)}_{j,i}))$}\\
\nu+\nu^{(\ell-1)}_{j,i}, & \text{otherwise}.
\end{cases}
\label{eq:Pi}
\end{equation}
This modification can also be extended for other message-passing decoders.
\begin{algorithm}
    \caption{MS-PI decoder}
     \label{alg:ms}
  \textbf{Input}: $\alpha$, $\mathbf{s}, H$, $L$ 

  \textbf{Output}: $\hat{\mathbf{e}}$
  \begin{algorithmic}[1]
      \State $\ell= 0$ \Comment{Initialization}
      \State $\lambda_j = \log (\frac{1-\alpha}{\alpha} )$, $\forall j \in [n]$
      \State ${\hat{s}_i} = 0$, $\forall i \in [m]$ 
      \State $\nu^{(0)}_{j,i} = \lambda_j\ \forall\ i \in [m],\ j\in [n]$
      \While{($\ell < L$ $\land$ $\mathbf{\hat{s}}\neq \mathbf{s}$)}
        \State Check node update (\ref{eq:cnu})
                \For{$j \in \mathcal{V}_{\text{yellow}}$}
       \State Variable node update (\ref{eq:vnu})
   \EndFor
        \For{$j \in \mathcal{V}_{\text{green}}$}
        \State $\nu = \lambda_j + \sum\limits_{i' \in \mathcal{M}(j) \backslash \{ i \}}\mu^{(\ell)}_{i',j}$
        \State $\nu^{(\ell)}_{j,i} = \begin{cases}
\nu, & \text{if ($\mathrm{sgn}(\nu)=\mathrm{sgn}(\nu^{(\ell-1)}_{j,i}))$}\\
\nu+\nu^{(\ell-1)}_{j,i}, & \text{otherwise}.
		 \end{cases}$
   \EndFor
     \State Decision Update (\ref{eq:dec})
     \State $\mathbf{\hat{s}}=\hat{\mathbf{e}} \cdot H^T$
     \If {$\mathbf{\hat{s}}=\mathbf{s}$}  \Return $\hat{\mathbf{e}}$
     \EndIf
     \State $\ell = \ell +1$
      \EndWhile
    
  \end{algorithmic}
\end{algorithm}
\subsection{Numerical Analysis of MS-PI}
We will now analyze the behavior of MS-PI for weight-\( \frac{w}{2} \) error patterns applied to a \( (w, 0) \) trapping set under the isolation assumption. However, since the proposed decoder introduces non-linearity, it is hard to handle an analytical solution. Therefore, we will perform numerical analysis for the two cases that were previously studied in Theorem~\ref{theorem}. 
\begin{enumerate}
\item \emph{$\frac{w}{2}$ errors applied to nodes of only one color:}
\begin{figure}[!htb]
    \centering
\includegraphics[width=0.4\textwidth]{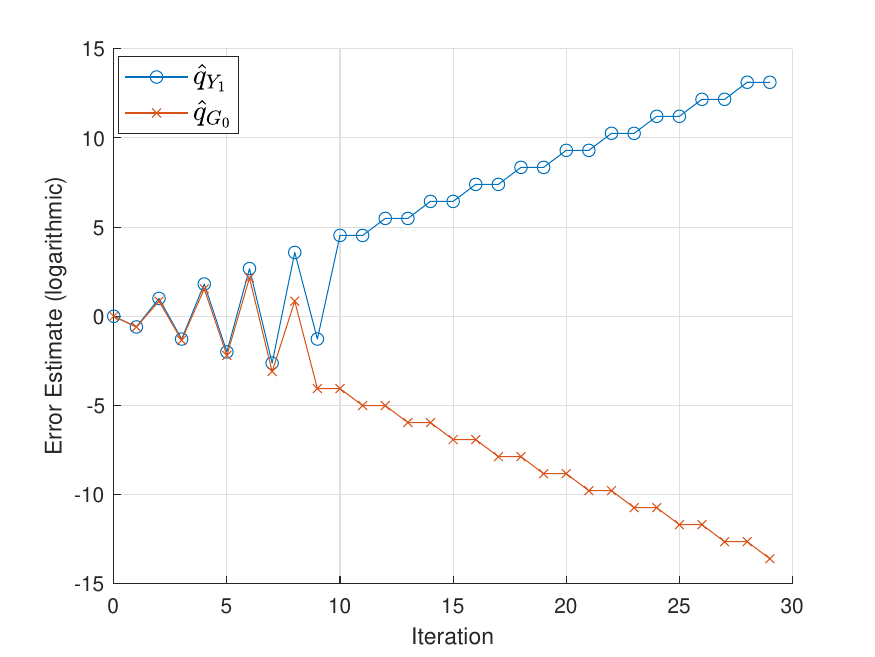}
    \caption{Numerical evaluation of error estimates for a weight-$3$ error applied to yellow nodes ($f=3,g=0$) of a $(6,0)$ stabilizer. Result is obtained by using the system of non-linear recursions in~(\ref{eq:Numone}) and~(\ref{eq:Numone1}). The scissor-like shape indicates that the decoder converges to one of the two degenerate error-patterns after the ninth iteration.}
    \label{fig:MSPIonecolor}
\end{figure}
Without loss of generality, we assume that all errors are applied to yellow nodes. The following non-linear system arises (for simplicity, we consider homogeneous relations):
\begin{equation}
\begin{aligned}
a_{\text{yellow}}(\ell) &= -(\frac{w}{2}-1) \cdot a_{\text{green}}(\ell-1) + \chi \cdot a_{\text{yellow}}(\ell-1), \\
a_{\text{green}}(\ell) &= -(\frac{w}{2}-1) \cdot a_{\text{yellow}}(\ell-1),
\end{aligned}
\label{eq:Numone}
\end{equation}
where subscripts \emph{yellow} and \emph{green} indicate the variable-to-check messages sent by yellow and green nodes respectively and \(\chi\) is an indicator function defined as follows:
\begin{equation}
\chi = \begin{cases} 
1 & \text{if } \mathrm{sgn}(-(\frac{w}{2}-1) \cdot a_{\text{green}}(\ell-1)) \neq \mathrm{sgn}(a_{\text{yellow}}(\ell-1)), \\
0 & \text{otherwise}.
\end{cases}
\label{eq:chi}
\end{equation}
The indicator function introduces a damping term to reduce oscillations when \( -(\frac{w}{2}-1)\cdot a_{\text{green}}(\ell-1) \) and \( a_{\text{yellow}}(\ell-1) \) have opposite signs. 
The error estimate for nodes $Y_1$ and $G_0$ is given as:
\begin{equation}
\begin{aligned}
\hat{q}_{Y_1} &= -\frac{w}{2}\cdot a_{\text{green}}(\ell-1)\\
\hat{q}_{G_0} &= -\frac{w}{2}\cdot a_{\text{yellow}}(\ell-1).
\end{aligned}
\label{eq:Numone1}
\end{equation}
In Fig.~\ref{fig:MSPIonecolor}, we numerically evaluated the non-linear system, with  $\frac{w}{2}=3$, and by initializing $a_{\text{yellow}}(0)$ and $a_{\text{green}}(0)$ to one. As illustrated, the decoder converges to one of the degenerate error-patterns.
\item \emph{$\frac{w}{2}$ errors are applied to  nodes of both colors:}
\begin{figure}[!htb]
    \centering
  \includegraphics[width=0.4\textwidth]{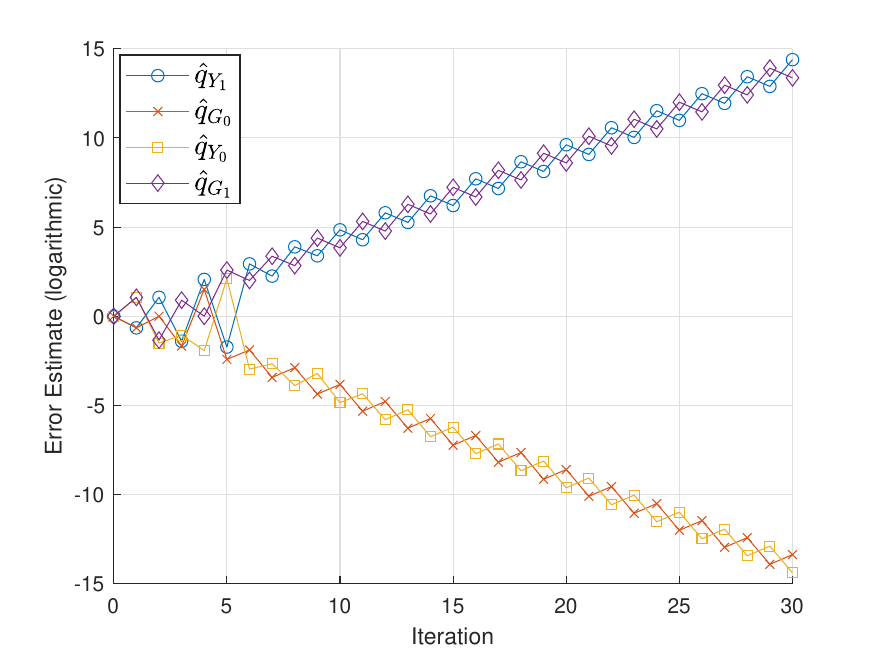}\caption{Numerical evaluation of error estimates for a weight-$4$ error-pattern ($f=3, g=1$) applied to both yellow and green nodes of a $(8,0)$ stabilizer. Result is obtained by using the system of non-linear recursions in~(\ref{eq:nonLin1}),~(\ref{eq:nonLin2}) and~(\ref{eq:nonLin3}). The decoder converges to one of the two degenerate error-patterns within six iterations.}
    \label{fig:evaluationMSPI}
\end{figure}
Now, since the new rule is only applied for variable nodes of yellow color, we have a system of eight recursions given below. Note that the indicator function compares the sign of the terms on its right to those on its left side, as in~(\ref{eq:chi}).
\begin{equation}
\begin{aligned}
a_{\text{yellow}}(\ell) &= (g - 1) \cdot c_{\text{green}}(\ell-1) - f \cdot b_{\text{green}}(\ell-1) \\
       &\quad + \chi \cdot a_{\text{yellow}}(\ell-1), \\
b_{\text{yellow}}(\ell) &= g \cdot c_{\text{green}}(\ell-1) - (f - 1) \cdot b_{\text{green}}(\ell-1) \\
       &\quad + \chi \cdot b_{\text{yellow}}(\ell-1), \\
c_{\text{yellow}}(\ell) &= (f - 1) \cdot a_{\text{green}}(\ell-1) - g \cdot d_{\text{green}}(\ell-1) \\
       &\quad + \chi \cdot c_{\text{yellow}}(\ell-1), \\
d_{\text{yellow}}(\ell) &= f \cdot a_{\text{green}}(\ell-1) - (g - 1) \cdot d_{\text{green}}(\ell-1) \\
       &\quad + \chi \cdot d_{\text{yellow}}(\ell-1),
\end{aligned}
\label{eq:nonLin1}
\end{equation}
\begin{equation}
\begin{aligned}
a_{\text{green}}(\ell) &= (g - 1) \cdot c_{\text{yellow}}(\ell-1) - f \cdot b_{\text{yellow}}(\ell-1), \\
b_{\text{green}}(\ell) &= g \cdot c_{\text{yellow}}(\ell-1) - (f - 1) \cdot b_{\text{yellow}}(\ell-1), \\
c_{\text{green}}(\ell) &= (f - 1) \cdot a_{\text{yellow}}(\ell-1) - g \cdot d_{\text{yellow}}(\ell-1), \\
d_{\text{green}}(\ell) &= f \cdot a_{\text{yellow}}(\ell-1) - (g - 1) \cdot d_{\text{yellow}}(\ell-1).
\end{aligned}
\label{eq:nonLin2}
\end{equation} 
Finally, the error estimates for nodes $Y_1$, $G_0$, $Y_0$ and $G_1$ are given as:
\begin{equation}
\begin{aligned}
\hat{q}_{Y_1} &= g\cdot c_{\text{green}}(\ell-1)-f\cdot b_{\text{green}}(\ell-1)\\
\hat{q}_{G_0} &= g\cdot c_{\text{yellow}}(\ell-1)-f\cdot b_{\text{yellow}}(\ell-1)\\
\hat{q}_{Y_0} &= f\cdot a_{\text{green}}(\ell-1)-g\cdot d_{\text{green}}(\ell-1)\\
\hat{q}_{G_1} &= f\cdot a_{\text{yellow}}(\ell-1)-g\cdot d_{\text{yellow}}(\ell-1).\\
\end{aligned}
\label{eq:nonLin3}
\end{equation}
MS-PI achieves convergence by forcing the estimates of $Y_1$ and $G_0$ to break their ties (same with $Y_0$ and $G_1$) which prevented the convergence in the MS algorithm.
The numerical evaluation of Fig.~\ref{fig:evaluationMSPI}, shows that for a $(8,0)$ stabilizer the estimate of nodes $Y_1$ and $G_1$ grows positively, while the estimates of  $Y_0$ and $G_0$ grow negatively, so the decoder converges to one of the two degenerate error-patterns. It is easy to see that if the modified rule is applied to green nodes, the decoder would converge to the other degenerate error-pattern.

\end{enumerate}

\section{Performance evaluation}
\label{sec:performance}
\begin{figure}[htbp]
    \centering
\includegraphics[width=0.4\textwidth]{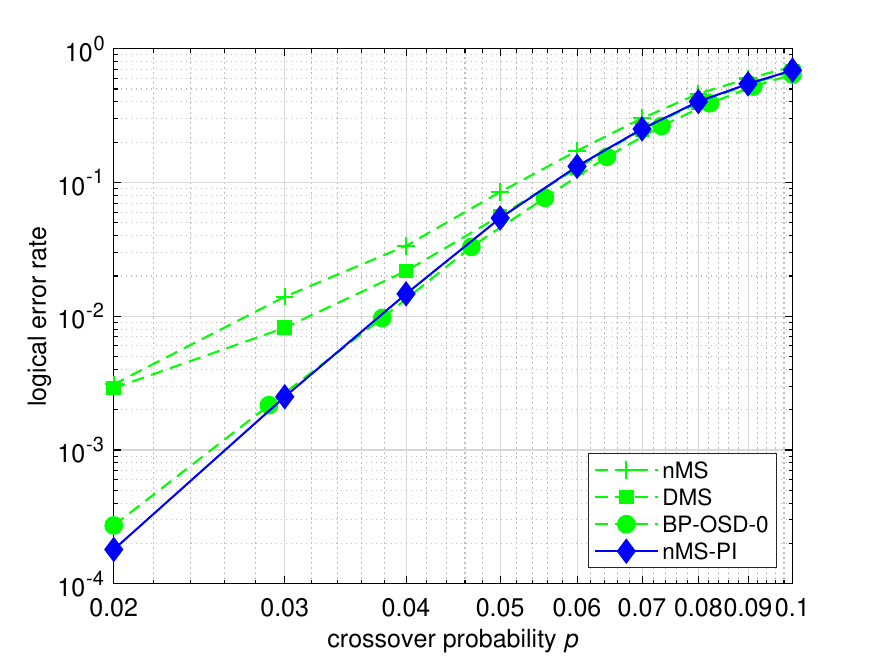}
    \caption{
    Performance of various decoders for the $[[144,12,12]]$ BB code. Notably, nMS-PI outperforms BP-OSD-$0$ in only $50$ iterations. Moreover, the second variable update rule only works well when used with the update rule of~(\ref{eq:vnu}).}

    \label{fig:memory144}
\end{figure}
\begin{figure}[htbp]
    \centering
\includegraphics[width=0.4\textwidth]{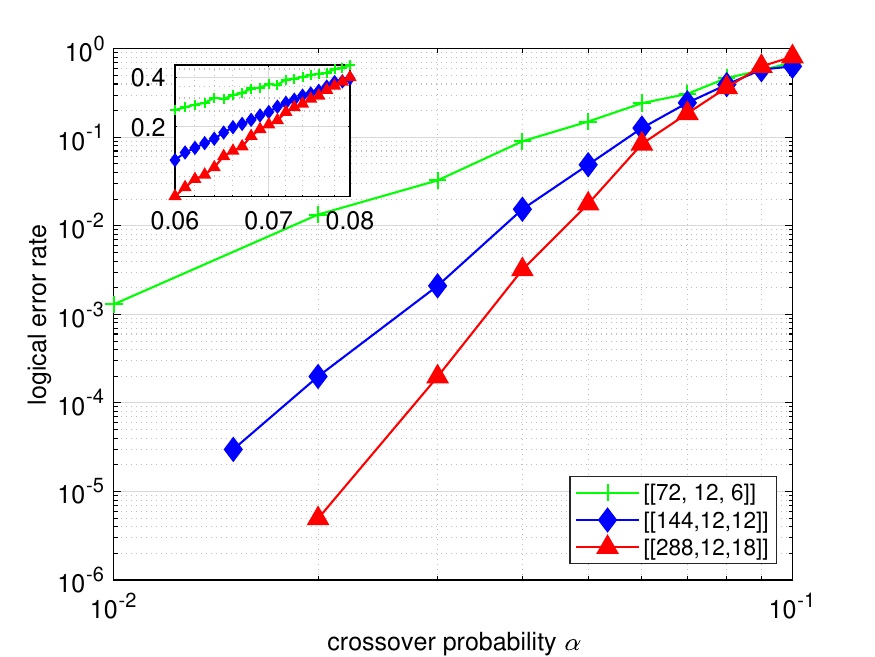}
    \caption{MS-PI performance for BB codes of various distances. The maximum iteration number is set to $L=50$ and the threshold approximates $7.8\%$.}
    \label{fig:BBthreshold}
\end{figure}
We simulate the performance of the proposed decoder for the family of BB codes. Unless noted otherwise, the maximum number of iterations for nMS ($\beta=0.875$) and nMS-PI is set to $50$ and every decoder shown in the plots follows parallel scheduling. The BP-OSD-$0$ decoder~\cite{osd}
is considered as the benchmark for evaluating the performance of nMS-PI. The plots for BP-OSD-$0$ decoder were generated based on~\cite{roffe_decoding_2020,Roffe_LDPC_Python_tools_2022}. A BSC with crossover probability $\alpha$ resulting in the bit-flip ($X$) errors is assumed for the noise model.

Fig.~\ref{fig:memory144} illustrates the performance of the proposed decoder, the two variants of nMS and BP-OSD-$0$ over the $[[144,12,12]]$ BB code. Damped MS (DMS)  refers to the nMS decoder where the variable update rule, as defined by~(\ref{eq:Pi}), is used uniformly to all variable nodes. We note that the proposed decoder significantly outperforms both instances of nMS and also slightly surpasses BP-OSD-$0$. The same plot also shows that there is a slight deviation in performance between the two nMS instances, indicating that there are error-patterns which are uncorrectable by one nMS instance but correctable by the other one. One can leverage that property by using diversity decoding, i.e., four decoders which apply all possible configurations of the two rules across the two block matrices (diversity decoding has been deployed in~\cite{collective}). 
The behavior of nMS-PI is consistent with BB codes of various lengths as observed in Fig.~\ref{fig:BBthreshold}, where the proposed decoder demonstrates a threshold which is approximately equal to $7.8 \%$.  The results attained for the family of BB codes for the BSC channel are comparable to the BP-OSD-$0$ results provided in~\cite{window}.
For low crossover probabilities, the performance gain of nMS-PI over nMS grows as the blocklength (and thus the degeneracy) of the BB codes increases. For instance, for the $[[288, 12, 18]]$ code and $\alpha=0.02$, nMS-PI achieves a three-order-of-magnitude reduction in logical error rate compared to nMS.
Finally, we observe that the threshold of nMS-PI improves with increasing iterations: for \( L = 100 \), the threshold rises to \( 8\% \), and for \( L = 200 \), it increases further to \( 8.1\% \).

\section{Conclusions and Future Work}
We have proposed a low-complexity message-passing decoding scheme, namely nMS-PI, applicable to QLDPC codes. 
Our decoder effectively utilizes degeneracy by applying distinct update rules - informed by past dynamics - to each block matrix. 
Simulation results demonstrate that our algorithm can significantly surpass nMS decoding performance and closely approach the performance of more complex decoders, such as BP-OSD, within a limited number of iterations. Furthermore, the results indicate that nMS-PI achieves a threshold of $7.8\%$ for the BB code family within just $50$ decoding iterations.
Notably, the proposed scheme demonstrates strong performance under parallel scheduling and exhibits linear complexity relative to the code's block length. Future research will focus on assessing the proposed decoder against more realistic noise models, such as the circuit-level noise model.

\bibliographystyle{IEEEtran}
\bibliography{totalRefs.bib}
\end{document}